\tikzstyle{cross}=[
\newtheorem{lemma}{Lemma}
\newtheorem{proposition}{Proposition}
\newcommand{\DeclareRuneSeparators}[1]{} 
\let\baraccent=\=
\renewcommand{\=}[1]{\stackrel{#1}{=}}
\DeclareSymbolFontAlphabet{\mathbb}{AMSb}
\begin{document}
	
	\pagestyle{plain}

	\makeatletter
	\@addtoreset{equation}{section}
	\makeatother
	\renewcommand{\theequation}{\thesection.\arabic{equation}}
	\pagestyle{empty}
	
	\vspace{0.5cm}
	
	\begin{center}
		
		{\LARGE \bf{Efficient Algorithm for Generating Homotopy Inequivalent Calabi-Yaus}
			\\[10mm]}
	\end{center}

	\begin{center}
		\scalebox{0.95}[0.95]{{\fontsize{14}{30}\selectfont Nate MacFadden$^{a}$}}
	\end{center}

	\begin{center}
		\vspace{0.25 cm}
		\textsl{$^{a}$Department of Physics, Cornell University, Ithaca, NY 14853 USA}\\
		
		\vspace{1cm}
		\normalsize{\bf Abstract} \\[8mm]

	\end{center}
        We present an algorithm for efficiently exploring inequivalent Calabi-Yau threefold hypersurfaces in toric varieties.
        A direct enumeration of fine, regular, star triangulations (FRSTs) of polytopes in the Kreuzer-Skarke database is foreseeably impossible due to the large count of distinct FRSTs. Moreover, such an enumeration is needlessly redundant because many such triangulations have the same restrictions to 2-faces and hence, by Wall’s theorem, lead to equivalent Calabi-Yau threefolds.
        We show that this redundancy can be circumvented by finding a height vector in the strict interior of the intersection of the secondary cones associated with each 2-face triangulation.
        We demonstrate that such triangulations are generated with orders of magnitude fewer operations than the na\"ive approach of generating all FRSTs and selecting only those differing on 2-faces.
        Similar methods are also presented to directly generate (the support of) the secondary subfan of all fine triangulations, relevant for random sampling of FRSTs.
	\begin{center}
		\begin{minipage}[h]{15.0cm}

		\end{minipage}
	\end{center}
	\newpage
	\setcounter{page}{1}
	\pagestyle{plain}
	\renewcommand{\thefootnote}{\arabic{footnote}}
	\setcounter{footnote}{0}
	%
	%
	\tableofcontents
	\newpage

\section{Introduction}

In order to understand quantum gravity, it is important to study compactifications of string theory. One way to do this is by studying Calabi-Yau threefold hypersurfaces in toric varieties (CYs). The natural object of interest, then, is the Kreuzer-Skarke database (KS)\cite{Kreuzer:2000xy}, a complete collection of all $473,800,776$ $4$D reflexive polytopes. This is because KS defines the largest known number of CYs: any fine, regular, star triangulation (FRST) of any polytope in KS specifies the topological data defining a CY. Briefly, these adjectives mean
\begin{enumerate}
    \item `fine' $\implies$ every lattice point in the polytope is a vertex of at least one simplex in the triangulation,
    \item `regular' $\implies$ the triangulation can be constructed following the procedure laid out in \cref{subsec:regularity}, and
    \item `star' $\implies$ the origin is a vertex of every simplex in the triangulation.
\end{enumerate}
The reasons for imposing such restrictions are discussed in detail in \cite{Demirtas:2020dbm}.

Brute-force approaches to enumerate this population of CYs (i.e., generate all FRSTs, map to their associated CYs) are limited primarily by the count of FRSTs: $N_\mathrm{FRST}<1.53\times10^{928}$\cite{Demirtas:2020dbm}. Fortunately, as is shown in \cite{Demirtas:2020dbm}, this collection of CYs is very redundant: topologically equivalent CYs define physically equivalent solutions of string theory and, by Wall's theorem\cite{Wall:1966}, any two FRSTs, $\mathcal{T}_1$ and $\mathcal{T}_2$, of a polytope $\Delta\in\mathrm{KS}$ with the same $2$-face restrictions generate topologically equivalent CYs. Thus, it is sufficient to study only FRSTs with unique $2$-face restrictions, hereon denoted `NTFE' for `non-$2$-face-equivalent'. There are ``only" $N_\mathrm{NTFE}<1.65\times10^{428}$\cite{Demirtas:2020dbm}\footnote{With some work, one can tighten the bound on $N_\textrm{NTFE}$ \textit{significantly}, from both sides\cite{MacFadden:WIP}. Exact bounds are not quoted here (no spoilers!) but, for a teaser, the upper bound drops by $>50$ orders of magnitude.} NTFE FRSTs. The set of NTFE FRSTs still contains redundancies due to non-trivial basis transformations mapping CYs into one-another, thus providing hope for even smaller counts of inequivalent CYs, but those redundancies will be harder to deal with - we do not concern them in this work.

There are two primary strategies to utilize this $500$ order-of-magnitude redundancy:
\begin{enumerate}
    \item (the `mod approach') generate all FRSTs and then mod out by $2$-face equivalence before running any of the expensive CY calculations or
    \item (the `on-demand approach') \textit{somehow} directly generate NTFE FRSTs.
\end{enumerate}
The mod approach, used in numerous studies such as \cite{Demirtas:2021gsq,Gendler:2022ztv,Gendler:2023hwg}, is nice because it is both simple and enables order-of-magnitude speedups over the aforementioned `brute-force' algorithm. However, as we discuss in \cref{sec:naive}, the mod approach is ultimately limited by the large count of FRSTs. For example, the largest $h^{1,1}$-value for which one can generate all FRSTs of a typical polytope (in reasonable time on a modern personal computer) is $h^{1,1}\lesssim10$. This is true even if the count of NTFE FRSTs of said polytope is relatively small, such that the subsequent CY-focused calculations are relatively quick.

The difficulty in generating all FRSTs motivates the on-demand approach: if one could directly generate NTFE FRSTs, the (highly redundant) count of all FRSTs would become irrelevant. Before this paper, as far as the author knows, there was no known method for such direct enumeration. However, as we show in \cref{sec:on-demand}, such direct generation can be achieved both simply and efficiently. In brief, one can (efficiently) either directly generate an FRST with user-specified $2$-face restrictions or prove that no such FRST exists. This on-demand generation algorithm is the main result of this paper.

Finally, in \cref{sec:fan}, we provide an adaptation of the on-demand NTFE algorithm. This adaptation, instead of generating FRSTs, generates an object called the `support of the secondary subfan of fine triangulations'. Roughly, this is an object describing \textit{all} NTFE FRSTs of a given polytope, even when direct enumeration is infeasible (e.g., for polytopes with $h^{1,1}\gtrsim 100$). The provided algorithm both provably generates the desired cone and it does so very quickly: the run time for even the largest polytope ($h^{1,1}=491$) is only $<1$min on modern personal computers. Run time is even quicker for smaller $\Delta\in\mathrm{KS}$. (A variant of) this subfan has already been used successfully for fairly sampling FRSTs\cite{Demirtas:2020dbm}, and more algorithms are currently being developed to further utilize it.

\section{The `mod' approach}
\label{sec:naive}

As we have seen, the population of physical interest when studying CYs from KS is that of NTFE FRSTs. The most direct method of enumerating only the NTFE FRSTs of a given polytope $\Delta\in\mathrm{KS}$ is by generating all FRSTs and then modding out by $2$-face equivalence. A nice way to visualize this procedure is through the use of the (bistellar) flip graph of $\Delta$. This is the graph in which nodes represent FRSTs and edges represent bistellar flips between the FRSTs. Such graphs can na\"ively be viewed as a portion of the string landscape - each node representing a solution of string theory.

In terms of flip graphs, the modding-out procedure is simply contracting nodes if they have the same $2$-face restrictions\footnote{It is not hard to see that the subgraph of FRSTs that are $2$-face equivalent to some triangulation, $\mathcal{T}$, is connected. Thus, contraction is indeed the same as modding-out.}. For example, consider the flip graph of the $0$th polytope in KS (lattice N) with $h^{1,1}=5$\footnote{The lattice points defining this polytope are $(-2,2,1,-1)$, $(-1,1,1,0)$, $(0,-1,-1,0)$, $(0,0,0,0)$, $(0,0,0,1)$, $(0,0,1,0)$, $(0,1,0,0)$, $(1,-2,1,1)$, $(1,0,0,0)$, and $(1,1,-1,-1)$.}, plotted on the left side of \cref{fig:flip_graphs}. This graph has $N=142$ nodes but, after modding out by $2$-face equivalence, only $N=2$ of them survive (see right side of \cref{fig:flip_graphs}). Since
\begin{enumerate}
    \item all $2$-face equivalent nodes represent physically-identical solutions of string theory,
    \item the NTFE flip graphs (i.e., the flip graphs after contraction) are typically much smaller than the FRST flip graphs (as we will soon argue more generally), and
    \item it is relatively cheap to mod out by $2$-face equivalence,
\end{enumerate}
it is clearly preferable to always perform this modding out before carrying out any expensive computations on the CYs. In other words, while the left side of \cref{fig:flip_graphs} is the na\"ive, initial estimate of some region of the landscape, it actually contains many copies of the same physical solutions! When one takes Wall's theorem into consideration, the landscape collapses to something much smaller.

\begin{figure}[t]
    \centering
    \hspace{-1cm}
    \begin{minipage}{0.5\textwidth}
        \centering
        \includegraphics[width=0.8\linewidth]{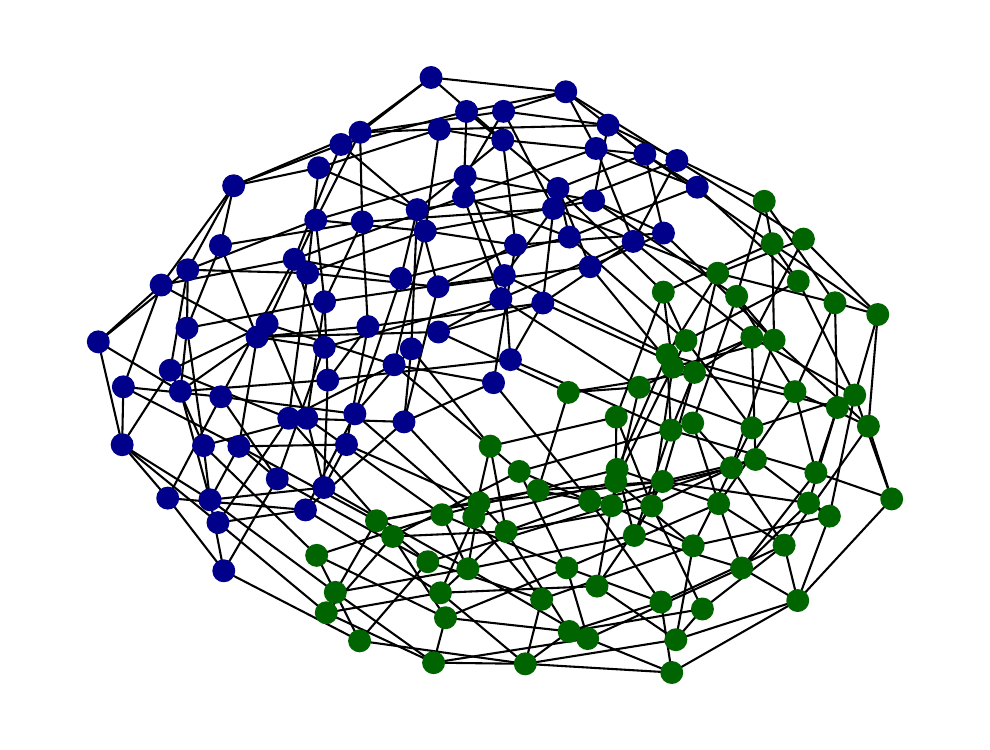}
    \end{minipage}
    \begin{minipage}{0.5\textwidth}
        \centering
        \includegraphics[width=0.7\linewidth]{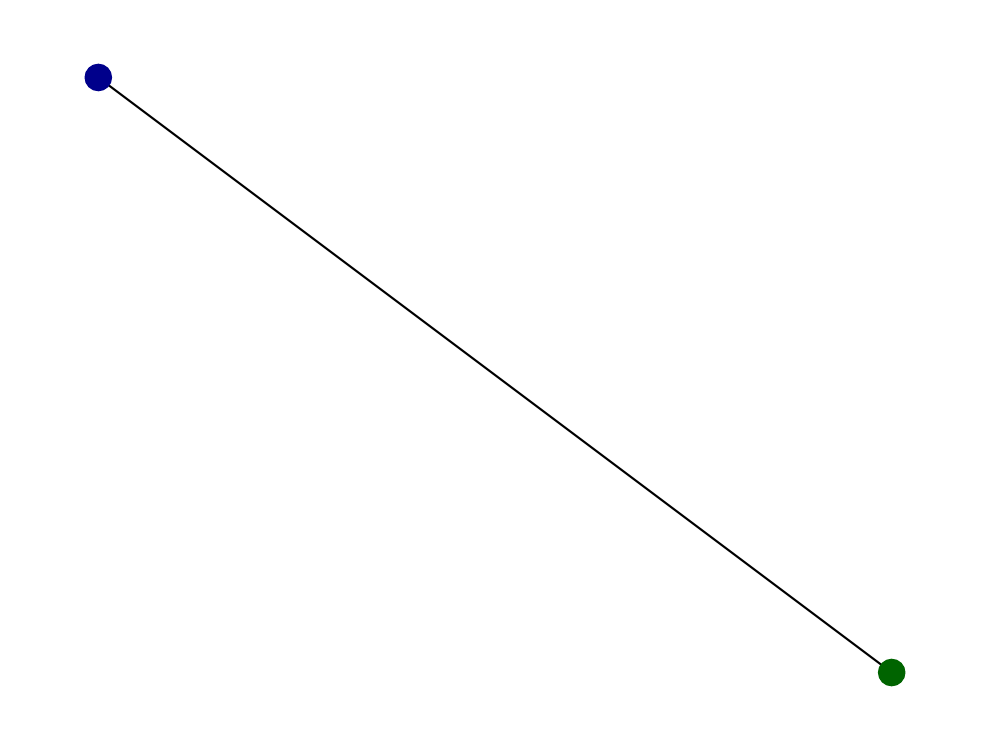}
    \end{minipage}
    
    \begin{tikzpicture}[remember picture, overlay]
        \draw[->, line width=0.8mm] (-1,3) .. controls (0,3.2) .. (1.5,3);
    \end{tikzpicture}
    
    \caption{\emph{Left}: the flip graph of the $0$th polytope in KS (lattice N) with $h^{1,1}=5$, colored by $2$-face equivalence. \emph{Right}: after contracting by $2$-face equivalence, only $2$ nodes out of the original $142$ survive.}
    \label{fig:flip_graphs}
\end{figure}

While the reduction in \cref{fig:flip_graphs} is sizeable, it is actually very modest compared to the gains achievable in KS. This is primarily due to the small value of $h^{1,1}$: even marginal increases in $h^{1,1}$ cause significant increases in the reduction from all FRSTs to only the NTFE FRSTs. For example, the $0$th polytope with $h^{1,1}=8$ contains only $N=4$ NTFE FRSTs despite having $N=1171$ FRSTs. As $h^{1,1}$ further increases, this redundancy of FRSTs (when the real population of interest are the NTFE FRSTs) grows rapidly. Even at relatively small values of $h^{1,1}$ (such as $10$), the enumeration of all FRSTs can be prohibitive in time and memory. This growing redundancy will be demonstrated in detail in \cref{subsec:benchmarks}.

This scaling with $h^{1,1}$ could have been anticipated by the bounds\cite{Demirtas:2020dbm} on the count of (NTFE) FRSTs for any single polytope in KS:
\begin{align}
    \textrm{\# FRSTs/poly} &\lesssim 10^{-5.31}\,10^{1.91\,h^{1,1}} \textrm{ and }\\
    \textrm{\# NTFE FRSTs/poly} &\lesssim 10^{-15.45}\,10^{0.90\,h^{1,1}},
\end{align}
plotted in \cref{fig:redundancy}. If the bound on $\textrm{\# FRSTs/poly}$ is not exponentially loose, then there is an exponentially-increasing redundancy of FRSTs over NTFE FRSTs. This redundancy ultimately limits the mod approach: quickly with $h^{1,1}$, the cost of generating all FRSTs (only to throw out most of them) limits the scope of the study.

\begin{figure}[t]
    \centering
    \includegraphics[width=0.75\textwidth]{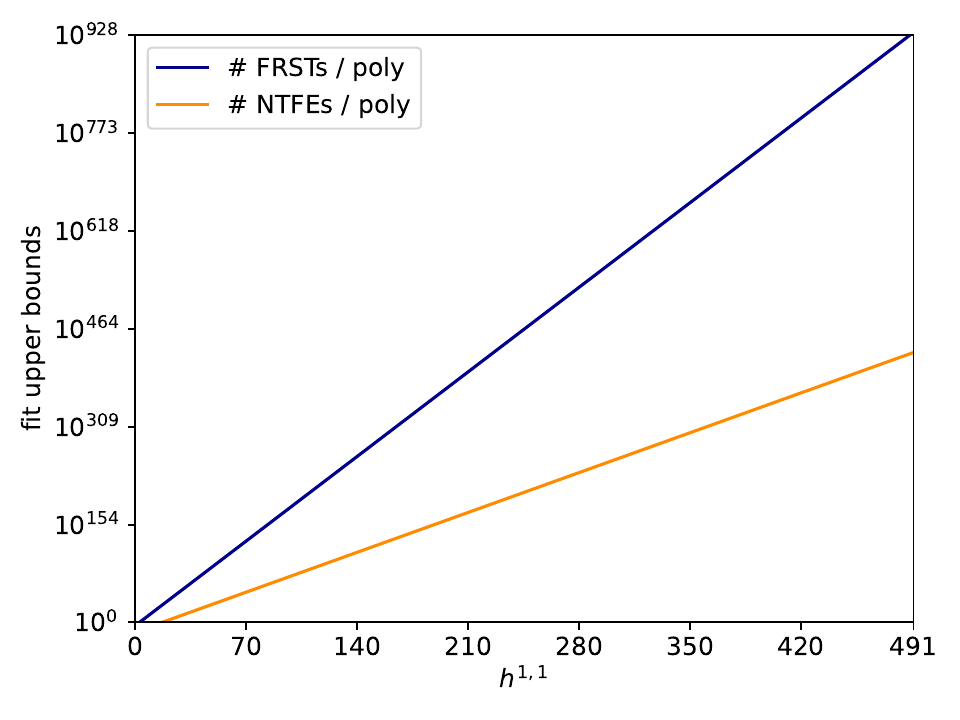}
    \caption{(Data/fits from \cite{Demirtas:2020dbm}). Approximate bounds on the number of FRSTs (blue; $N\lesssim 10^{-5.31}\,10^{1.91\,h^{1,1}}$) and the number of NTFEs (yellow; $N\lesssim 10^{-15.45}\,10^{0.90\,h^{1,1}}$) per $\Delta\in\mathrm{KS}$, as a function of $h^{1,1}$. While both bounds scale exponentially with $h^{1,1}$, the bound on FRSTs grows exponentially quicker than that on NTFEs.}
    \label{fig:redundancy}
\end{figure}

Clearly, it would be greatly beneficial to directly generate the NTFE FRSTs, rather than stepping through all FRSTs. This would maintain the speedup of having fewer costly operations on CYs, but it would achieve the additional speedup in reduced\footnote{`Reduced' since there still may be non-trivial basis transformations mapping CYs into one-another. For small $h^{1,1}$, these equivalences are studied in detail in \cite{Gendler:WIP}.} generation of topologically equivalent CYs in intermediate steps. In terms of flip graphs, the goal is to directly generate the nodes of the graph after contraction (right side of \cref{fig:flip_graphs}). In the following section, we demonstrate how to do this.

\section{On-demand generation}
\label{sec:on-demand}

In this section, we demonstrate how to generate NTFE FRSTs `on-demand'. This discussion leans on regularity, so we must first recall some definitions and notation.

\subsection{Regularity}
\label{subsec:regularity}
Let $\mathcal{A} = \{p_1, \dots, p_n\}$ be a set of points $p_i\in\mathbb{R}^d$ and let ${p_i}^j$ be the $j$th coordinate of $p_i$. A triangulation of $\mathcal{A}$ is called `regular' if it is obtainable via\footnote{N.B.: Other definitions of regularity exist, but the one provided is most useful for current discussions.}:
\begin{enumerate}
    \item `lifting' each point $p_i$ by some height $h_i$: $p_i\to\tilde{p}_i = ({p_i}^1, \dots, {p_i}^d, h_i)$ and
    \item projecting out the final coordinate of the `lower faces' (those whose outwards-facing normal vectors have a negative $(d+1)$\textsuperscript{st} component) of $C = \mathrm{conv}\left(\{\tilde{p}_0, \dots, \tilde{p}_n\}\right)$.
\end{enumerate}
For an illustration of this lifting/projecting procedure for generating regular triangulations, see \cref{fig:lifting_basic}.

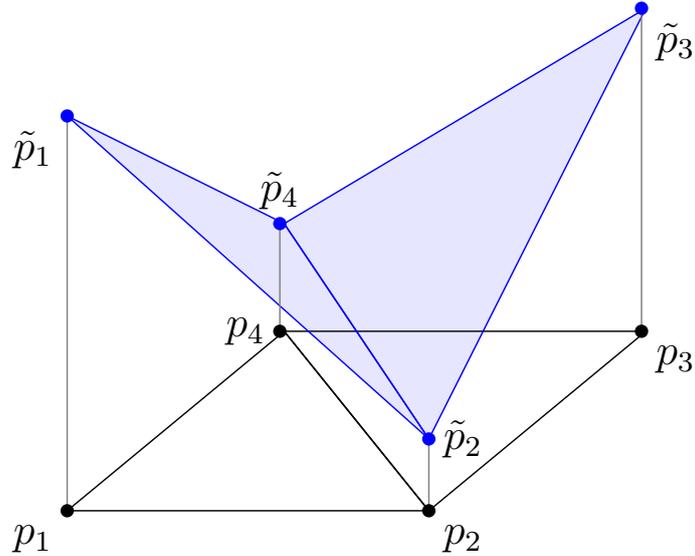
\begin{figure}[t]
    \centering
    \resizebox{0.6\textwidth}{!}{
    \begin{tikzpicture}[scale=3.5]
        \draw[black] (0, 0) -- (1, 0) -- (0+0.6, 0.5) -- cycle;
        \draw[black] (1+0.6, 0.5) -- (1, 0) -- (0+0.6, 0.5) -- cycle;
    
        \fill[blue,opacity=0.1] (0, 0+1.1) -- (1, 0+0.2) -- (0+0.6, 0.5+0.3) -- cycle;
        \fill[blue,opacity=0.1] (1+0.6, 0.5+0.9) -- (1, 0+0.2) -- (0+0.6, 0.5+0.3) -- cycle;
        \draw[blue] (0, 0+1.1) -- (1, 0+0.2) -- (0+0.6, 0.5+0.3) -- cycle;
        \draw[blue] (1+0.6, 0.5+0.9) -- (1, 0+0.2) -- (0+0.6, 0.5+0.3) -- cycle;
    
        \foreach \x in {0,1} {
            \foreach \y in {0,0.5} {
                \pgfmathsetmacro\height{
                    ifthenelse(\x==0 && \y==0, 1.1,
                    ifthenelse(\x==1 && \y==0, 0.2,
                    ifthenelse(\x==1 && \y==0.5, 0.9, 
                    ifthenelse(\x==0 && \y==0.5, 0.3, ))))
                }
                \draw[thin, gray] (\x+1.176*\y,\y) -- (\x+1.176*\y,\y+\height);
            }
        }
    
        \node[fill=black, circle, inner sep=1.3pt] at (0,0) {};
        \node[fill=black, circle, inner sep=1.3pt] at (1,0) {};
        \node[fill=black, circle, inner sep=1.3pt] at (1+1.176*0.5,0.5) {};
        \node[fill=black, circle, inner sep=1.3pt] at (0+1.176*0.5,0.5) {};

        \node[below left] at (0,0) {$p_1$};
        \node[below right] at (1,0) {$p_2$};
        \node[below right] at (1+1.176*0.5,0.5) {$p_3$};
        \node[left] at (0+1.176*0.5,0.5) {$p_4$};
    
        \node[fill=blue, circle, inner sep=1.3pt] at (0,0+1.1) {};
        \node[fill=blue, circle, inner sep=1.3pt] at (1,0+0.2) {};
        \node[fill=blue, circle, inner sep=1.3pt] at (1+1.176*0.5,0.5+0.9) {};
        \node[fill=blue, circle, inner sep=1.3pt] at (0+1.176*0.5,0.5+0.3) {};

        \node[below left] at (0,0+1.1) {$\tilde{p}_1$};
        \node[right] at (1,0+0.2) {$\tilde{p}_2$};
        \node[below right] at (1+1.176*0.5,0.5+0.9) {$\tilde{p}_3$};
        \node[above] at (0+1.176*0.5,0.5+0.3) {$\tilde{p}_4$};

    \end{tikzpicture}
    }
    \caption{The lifting of the point set $\mathcal{A}=\{p_1,p_2,p_3,p_4\}$ by heights $h_1=1.1$, $h_2=0.2$, $h_3=0.9$, and $h_4=0.3$. The convex hull of the lifted point set, $\tilde{\mathcal{A}}$, is a $3$-simplex whose lower two faces are plotted in blue. Projecting out the lifted coordinate generates the regular triangulation plotted in black.}
    \label{fig:lifting_basic}
\end{figure}

\subsubsection{The secondary cone}
\label{subsubsec:secondary_cone}

The heights generating a given regular triangulation are not unique. For example, consider the point configuration $\mathcal{A}=\{p_1,p_2,p_3,p_4\}$ as in \cref{fig:lifting_basic}. These points are lifted by heights $h_1=1.1$, $h_2=0.2$, $h_3=0.9$, and $h_4=0.3$, resulting in the convex hull $C=\mathrm{conv}(\{\tilde{p}_1, \tilde{p}_2, \tilde{p}_3, \tilde{p}_4\})$, a $3$-simplex. The lower faces of $C$ are then outlined in blue, corresponding to the regular triangulation plotted in black. Many different heights lead to the same triangulation: for example, one is free to increase $h_1$ and/or $h_3$ (or, conversely, decrease $h_2$ and/or $h_4$), without changing (the projection of) the lower faces of $C$.

To understand the space of heights that are allowed for a given triangulation, $\mathcal{T}$, organize the heights in a vector\footnote{Since vectors are ordered, a choice of point ordering must be made. We will (arbitrarily) always choose the order in which the point configuration is written on the page.}, $h = (h_1,\, \dots,\, h_n)$, called the `height vector' living in `height space'. E.g., the height vector associated to \cref{fig:lifting_basic} is $h=(1.1,\, 0.2,\, 0.9,\, 0.3)$. It is not difficult to show\footnote{Here is a brief argument: note that, if $h$ generates a triangulation $\mathcal{T}$, then so does $c\, h$ for any $c>0$. Likewise, if $g$ also generates $\mathcal{T}$, then it is not hard to see that $h+g$ generates $\mathcal{T}$.} that the collection of all height vectors generating the same triangulation, $\mathcal{T}$, forms the interior of a polyhedral cone called the `secondary cone'\cite{Loera:2010}. That is, $h$, generates the triangulation, $\mathcal{T}$, if and only if $h$ is in the strict interior of $\mathcal{T}$'s secondary cone.

The secondary cone is most directly represented in H-representation for which, in $2$D, the inwards-facing defining hyperplane inequalities, $H$, may be calculated as in \cref{alg:cpl_ineqs}\cite{Berglund:1995}\footnote{In fact, as seen in \cref{app:gerald_verification}, the explicit null-space calculations used in this algorithm are not necessary. The null-spaces can be identified by simply counting points.}. For example, following this algorithm, one finds a single hyperplane inequality associated to \cref{fig:lifting_basic}: $h_2 + h_4 \leq h_1 + h_3$. Any heights strictly obeying this inequality generate the displayed triangulation. This matches intuition: all that matters in this case is that every interior point of the line $(\tilde{p}_2, \tilde{p}_4)$ is below the corresponding point in either $\mathrm{conv}({\{\tilde{p}_1, \tilde{p}_2, \tilde{p}_3\}})$ or in $\mathrm{conv}({\{\tilde{p}_1, \tilde{p}_3, \tilde{p}_4\}})$.

\begin{algorithm}[t]
\caption{secondary cone}
\label{alg:cpl_ineqs}
\begin{algorithmic}
\State let \textit{H} be an empty array
\For{adjacent simplices $\{p_{n1}, p_{s1}, p_{s2}\}$ and $\{p_{n2}, p_{s1}, p_{s2}\}$}
    \State let $I=[s1,s2,n1,n2]$
    \For{column $c\in \mathrm{null}\left( \begin{bmatrix}
    p_{I[0]} & p_{I[1]} & p_{I[2]} & p_{I[3]} \\
    1 & 1 & 1 & 1
    \end{bmatrix} \right)$}
        \State let $n$ be a vector with $n^i=\begin{cases}
            c^j & i=I[j]\\
            0 & \textrm{otherwise}
        \end{cases}$
        \If{$n^{I[2]}<0$}
            \State append $(-n)^\mathrm{T}$ to $H$
        \Else
            \State append $n^\mathrm{T}$ to $H$
        \EndIf
    \EndFor
\EndFor
\State \textbf{return} $H$
\end{algorithmic}
\end{algorithm}

\subsubsection{Multiple secondary cones}
\label{subsubsec:secondary_cones}

Consider the case of multiple point configurations, $\mathcal{A}_1, \dots, \mathcal{A}_n$, with corresponding regular triangulations $\mathcal{T}_1, \dots, \mathcal{T}_n$. As we have discussed, due to regularity, each of these triangulations $\mathcal{T}_i$ can be generated by a height vector, $h_i$. Stronger, a height vector, $h_i$, generates $\mathcal{T}_i$ if and only if $H_i\, h_i>0$, where the rows of $H_i$ are the inwards-facing hyperplane normals defining $\mathcal{T}_i$'s secondary cone.

Consider the union $\mathcal{A} = \bigcup_{i=1}^n \mathcal{A}_i$. We want a height vector, $h$, which `simultaneously generates' all triangulations $\mathcal{T}_1, \dots, \mathcal{T}_n$. By that, we mean: let $\Pi_i$ be a projection matrix from the height space associated to $\mathcal{A}$ to that associated with $\mathcal{A}_i$. Then, we seek an $h$ such that $H_i\, \left(\Pi_i\, h\right) > 0$ for all $i$. Define
\begin{equation}
    H = \begin{bmatrix} H_1 \Pi_1\\ \vdots\\ H_n \Pi_n \end{bmatrix}.
\end{equation}
This object makes sense because each $H_i \Pi_i$ object has width $\left| \mathcal{A} \right|$, and thus can be vertically stacked. Thus, the goal\footnote{Note: it is OK to embed in an even higher dimensional space. By doing so, one is simply introducing more points in $\mathcal{A}$. The heights of these `extra' points will simply not be constrained by $H$.} is an $h$ such that $H\, h>0$. This is trivially done with linear programming.

\subsection{Applied to NTFE FRSTs}
\label{subsec:on-demand}

The application of \cref{subsubsec:secondary_cones} to the study of NTFE FRSTs will be direct, but we first must make some observations:
\begin{enumerate}
    \item Fineness with respect to all points is not necessary for the CY. Instead, all that is necessary is fineness with respect to points appearing on $2$-faces of the polytope\cite{Braun:2017}.
    \item Any non-star, regular triangulation can be converted into a star triangulation without affecting the $2$-face triangulations. This is simply by lowering the height of the origin until it appears in all simplices. Thus, the `star' requirement is semi-ignorable.
\end{enumerate}
Thus, all one needs to do is generate fine, regular triangulations, where `fine' will hereon mean `fine with respect to $2$-faces'. We will denote such triangulations as `FR(S)Ts', to emphasize that they are not truly star, but that there is a direct procedure to convert them to a star triangulation without affecting the $2$-faces. Those triangulations that are genuinely star will still be denoted as FRSTs.

With these notes, the procedure in \cref{subsubsec:secondary_cones} is exactly what we need: if one lets
\begin{enumerate}
    \item $\mathcal{A}$ be the point set of the polytope of interest,
    \item $\mathcal{A}_i$ be the point set of each $2$-face (in an arbitrary $2$D basis),
    \item $\Pi_i$ be the projection from the ordering of $\mathcal{A}$ to $\mathcal{A}_i$, and
    \item $H_i$ be the (inwards facing) hyperplane inequalities of each $2$-face's secondary cone,
\end{enumerate}
then direct application of \cref{subsubsec:secondary_cones} generates NTFE FR(S)Ts on-demand. This is the main result of this paper.

This calculation has a simple geometric interpretation. The objects $H_i\,\Pi_i$ each represent the embedding $\mathcal{T}_i$'s secondary cone in the height space of $\mathcal{A}$. By vertically stacking such $H_i\,\Pi_i$, we are taking the intersection of such cones. Thus, $H$ is just the intersection of all $2$-face secondary cones. This is effectively equivalent to the problem of finding a height vector associated to an FRST, except we have dropped all hyperplane inequalities constraining points not on $2$-faces.

It is important to stress: this algorithm, the main result, follows from three crucial observations:
\begin{enumerate}
    \item CYs only require fineness on $2$-faces,
    \item the height of the origin is ignorable in intermediate steps, and
    \item by Wall's theorem, only FRSTs with distinct $2$-faces are physically relevant.
\end{enumerate}
Together, these observations suggest that, \textbf{when constructing toric solutions to string theory from KS, one loses nothing by restricting attention to $2$-faces}. The other points \textit{do} contain information (especially the origin), but those points are not needed for specifying NTFEs and said information is recoverable. Needless to say, $2$D polytopes/triangulations are much easier to handle than $4$D ones.

\subsection{Benchmarks}
\label{subsec:benchmarks}

The entire purpose of the on-demand algorithm is its efficiency, so we provide some benchmarks against the mod algorithm. Note that the mod algorithm comprises two steps: generating all FRSTs and then modding out by them. Thus, it suffices to show that the on-demand algorithm is more efficient than \textit{just} the step of generating all FRSTs.

To directly generate NTFE FRSTs, we implement the on-demand algorithm in Python, using the CYTools \cite{Demirtas2022cytools} framework. To generate all FRSTs, we use TOPCOM\cite{Rambau:TOPCOM:2002} (specifically, the CYTools wrapper for this software). Note: TOPCOM is a sophisticated software, written in C++, that has been developed for nearly $20$ years. The on-demand implementation, in contrast, is written in $\sim$weeks in Python and has had relatively little optimization. Despite these serious handicaps, we will see that the on-demand algorithm achieves order of magnitude speedups (and memory reduction) when compared to TOPCOM's high bar.

To compare the two algorithms, we generate all FRSTs/NTFEs of the first $10$ favorable polytopes of KS (lattice N) for each $5\leq h^{1,1}\leq 10$. The total number of triangulations, peak memory usage, and clock time for each $h^{1,1}$ are all recorded. All calculations are done in a Docker Image ($8$GB RAM) on a M1 Pro chip.

As is visible in \cref{tab:benchmarking}, the count of NTFEs and FRSTs both increase exponentially with $h^{1,1}$, but $N_\mathrm{FRST}$ increases exponentially quicker than $N_\mathrm{NTFE}$. In this benchmark, enumeration of all FRSTs can only be performed for $h^{1,1}\leq8$ since it requires $>8$ GB of memory to enumerate all FRSTs of the selected polytopes with $h^{1,1}\geq9$. Contrast this to the on-demand generation, which never requires more than $15$ MB of memory, orders of magnitude less than the mod algorithm.

\begin{table}[H]
\begin{center}
\begin{tabular}{c | c c | c c | c c} 
 $h^{1,1}$ & \shortstack{$N_\mathrm{FRST}$ \\ (TOPCOM)} & \shortstack{$N_\mathrm{NTFE}$ \\ (CYTools)} & \shortstack{$M_\mathrm{FRST}$ [MB] \\ (TOPCOM)} & \shortstack{$M_\mathrm{NTFE}$ [MB] \\ (CYTools)} & \shortstack{$T_\mathrm{FRST}$ [s] \\ (TOPCOM)} & \shortstack{$T_\mathrm{NTFE}$ [s] \\ (CYTools)}\\
 \hline
 5 & $201$ & $14$ & $1.8$ & $<0.01$ & $1.7$ & $2.4$\\
 6 & $1121$ & $29$ & $94.6$ & $<0.01$ & $19.8$ & $3.2$\\
 7 & $5111$ & $53$ & $775.9$ & $<0.01$ & $118.5$ & $3.3$\\
 8 & $28402$ & $111$ & $5562.9$ & $<0.01$ & $810.8$ & $4.5$\\
 9 & \leaders\hbox{\rule[0.4em]{.1pt}{0.4pt}}\hskip 2em\mbox{}\tablefootnote{Using more memory, one finds that $N_\mathrm{FRST}=256131$ for the first $10$ favorable polytopes at $h^{1,1}=9$.}
 & $570$ & OOM & $0.06$ & \leaders\hbox{\rule[0.4em]{.1pt}{0.4pt}}\hskip 3em\mbox{} & $12.0$\\
 10 & \leaders\hbox{\rule[0.4em]{.1pt}{0.4pt}}\hskip 3em\mbox{} & $5349$ & OOM & $14.7$ & \leaders\hbox{\rule[0.4em]{.1pt}{0.4pt}}\hskip 3em\mbox{} & $95.1$\\
\end{tabular}
\caption{The number of FRSTs/NTFEs of the first $10$ favorable polytopes (in KS ordering) for each $5\leq h^{1,1}\leq 10$, along with the peak memory usage and the clock times needed to calculate them. All calculations are performed in a Docker Image ($8$GB RAM) on a M1 Pro chip. \emph{Counts}: the counts $N_\mathrm{FRST}$ and $N_\mathrm{NTFE}$ both increase exponentially, but $N_\mathrm{FRST}$ increases exponentially quicker. \emph{Memory}: the peak memory appears to increase exponentially (as expected due to the exponentially increasing counts). Even at $h^{1,1}=8$, the memory load to generate all FRSTs is large ($>5$GB); larger values of $h^{1,1}$ lead to crashes due to lack of memory (i.e., out-of-memory or `OOM'). The memory needed to generate all NTFEs is negligible for all studied $h^{1,1}$. \emph{Time}: while TOPCOM is quicker than the on-demand algorithm at $h^{1,1}=5$, this is likely just due to optimization/language differences. At every other $h^{1,1}$, the on-demand algorithm is quicker, with an exponentially increasing speedup as $h^{1,1}$ increases.}
\label{tab:benchmarking}
\end{center}
\end{table}

This demonstrates the first, crucial, point: the on-demand algorithm requires much less memory and thus can probe KS much deeper. For example, with the on-demand algorithm, one can study all NTFE FRSTs for select\footnote{In fact, one can actually push up to $h^{1,1}\approx 30$ but, at such $h^{1,1}$, it becomes increasingly unlikely that a randomly chosen polytope will have few enough NTFE FRSTs.} polytopes with $h^{1,1}=20$, such as is plotted in \cref{fig:big_flip_graph}. Even though we did some extra work here (in calculating the adjacencies/edges), it still only took $<2$s to generate this graph. While this plot displays a small number ($81$) of NTFEs, the associated number of FRSTs is too large to reasonably calculate on one's laptop - the mod algorithm has little hope of getting close to such $h^{1,1}$-scales.

\begin{figure}[t]
    \centering
    \includegraphics[width=0.75\textwidth]{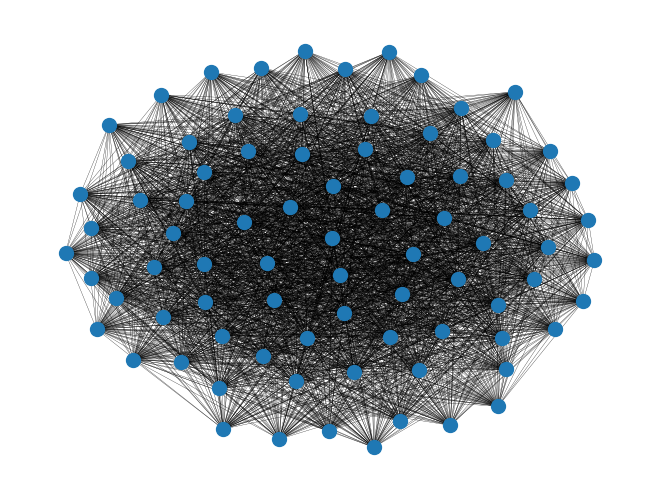}
    \caption{The NTFE flip graph for the $0$th polytope with $h^{1,1}=20$. It is unlikely to calculate this graph (in reasonable time) using the methods of \cref{sec:naive}; one needs \cref{subsec:on-demand}.}
    \label{fig:big_flip_graph}
\end{figure}

The second, crucial, point is the speed: as is visible in \cref{tab:benchmarking}, the on-demand algorithm is exponentially quicker than the mod-algorithm. For example, consider the largest $h^{1,1}$ for which a comparison can be made: $h^{1,1}=8$. Here, it only takes $\sim 4.5$s to generate all $117$ NTFEs of the first $10$ polytopes, compared to the $>13.5$min to generate the $28402$ FRSTs. This speed of the on-demand algorithm (in addition to its minimal memory draw) enables NTFE generation to become somewhat negligible. One can focus on the physics questions at hand, rather than this technical step of generating the NTFEs.

There are three notes of primary importance regarding the computational efficiency of the on-demand algorithm:
\begin{enumerate}
    \item Currently, the on-demand algorithm is limited by the final step, using LP to find points in cones. If one just cares about generating the relevant (intersected) cones, then one can push studies much deeper - without further optimization, the on-demand algorithm generates $\sim700000$ relevant cones per second (no checks for solid-ness in this number). This takes very little memory.
    \item If not all NTFEs are needed, then one can either generate random $2$-face triangulations to `extend' to FR(S)Ts or generate all relevant cones (as above) and sample from said cones. Both methods are significantly more efficient than the corresponding methods of either the brute-force or mod algorithms. The first option (generating random $2$-face triangulationss) is preferred - timing/memory in this case is only limited by the number of points on $2$-faces. Thus, the algorithm scales even to the largest polytope, with $h^{1,1}=491$.
    \item There is significant structure in the cone's hyperplanes that is simply being ignored by our use of an LP solver. E.g., all hyperplanes are sparse (at most $4$ nonzero elements), sum to $0$, and come in one of three patterns (either consisting of elements $\{-1,-1,1,1\}$ or $\{-2N-2,1,1,2N\}$ or $\{-2N-3,1,1,2N+1\}$ for $N\in\mathbb{Z}_{\geq}$; see \cref{app:gerald_verification}).
\end{enumerate}

Clearly, then, despite the on-demand algorithm having numerous handicaps, one finds that it is much quicker than the mod algorithm. Further, we see that, if one can reformulate questions in terms of the relevant cones, then nearly all computational difficulties regarding generation of NTFEs become extremely negligible.

\section{Secondary subfan}
\label{sec:fan}

We have seen that secondary cones define the heights generating regular triangulations. Following this, we studied the set of all secondary cones that could be associated with a polytope, mod $2$-face equivalence. This collection is formally a fan, specifically the fan of all FR(S)Ts. It is conventional to view this as a subfan of only the fine triangulations of the polytope, and the `full' fan as being all (regular) triangulations of the polytope.

This object, in general, is quite complicated: for a polytope $\Delta\in\mathrm{KS}$, the associated secondary subfan is composed of $\#$ NTFE FR(S)Ts cones. For example, each polytope with $h^{1,1}=462$ has $\lesssim 10^{401}$ associated NTFE FR(S)Ts\cite{Demirtas:2020dbm}. Resultantly, their associated subfans would each be made of $\lesssim 10^{401}$ cones. Even if these bounds are exponentially loose, the subfans are still comprised of an infeasible-to-count number of cones.

While such a complete description for generic polytopes is foreseeably impossible (it is equivalent to enumerating all NTFE FR(S)Ts), it actually is not difficult to describe the union of all such cones. This union is itself a polyhedral cone and it has the interpretation that points in it generate fine, regular subdivisions of the polytope. In fact, except for a measure-$0$ set (the boundaries between secondary cones), these height vectors generate FR(S)Ts. Denote (the defining hyperplanes of) this union as $G$.

Our strategy for determining $G$ will be analogous to the on-demand FRST algorithm in \cref{subsec:on-demand}:
\begin{enumerate}
    \item for each $2$-face, $f_i$, of the polytope of interest, $\Delta$,
    \item generate the associated cone defining the height-space of all fine triangulations, denoted $G_i$, and then
    \item calculate $G = \begin{bmatrix} G_1 \Pi_1\\ \vdots\\ G_n \Pi_n \end{bmatrix}$.
\end{enumerate}
This algorithm works for the same reason that the on-demand FRST algorithm works: a regular triangulation of $\Delta$ is fine on all of its $2$-faces if and only if it is fine on each $2$-face. A regular subdivison of a $2$-face is fine if and only if the associated height vector is in $G_i$. Such a technique allows us to restrict attention to $2$D polytopes, for which the algorithm follows simply (\cref{alg:gerald}; verified in \cref{app:gerald_verification}).

\begin{algorithm}[t]
\caption{(support of the) secondary subfan of fine triangulations}
\label{alg:gerald}
\begin{algorithmic}
\State let \textit{H} be an empty array
\State \# (linear constraints)
\For{$\{p_i, p_j\}\subset\mathcal{A}$ for which $\exists!p_r\in\mathcal{A}$ such that $p_r\in\mathrm{int}\left(\mathrm{conv}(\{p_i, p_j\})\right)$}
    \State let $n$ be a vector such that $n^l = \begin{cases}
        1 & l\in\{i,j\}\\
        -2 & l=r\\
        0 & \mathrm{otherwise}.
    \end{cases}$
    \State append $n$ to $H$
\EndFor
\State \# (planar constraints)
\For{$\{p_i, p_j, p_k\}\subset\mathcal{A}$ for which $\exists!p_r\in\mathcal{A}$ such that $p_r\in\mathrm{int}\left(\mathrm{conv}(\{p_i, p_j, p_k\})\right)$}
    \State let $n$ be a vector such that $n^l = \begin{cases}
        1 & l\in\{i,j,k\}\\
        -3 & l=r\\
        0 & \mathrm{otherwise}.
    \end{cases}$
    \State append $n$ to $H$
\EndFor
\State \textbf{return} $H$
\end{algorithmic}
\end{algorithm}

Thus, by simply evaluating \cref{alg:gerald}, one can generate the support of the subfan of fine triangulations of a given polytope. The only difficulty in this algorithm is finding the subsets $S\subset\mathcal{A}$ for which the checks (i.e., exactly one point in the strict interior of $S$; no non-vertices on boundary) apply. This comes down to evaluating the area of triangles and checking if there are lattice points along a line segment. These operations may efficiently be performed using the shoelace formula (+ Pick's theorem) and checks on GCDs.

While this subfan is useful for sampling triangulations, as shown in \cite{Demirtas:2020dbm}, we do not make further comments in this paper on its applications (other than that other algorithms using it are currently being developed).

\FloatBarrier

\section{Conclusion}
\label{sec:conclusion}

The Kreuzer-Skarke database is the largest known collection of Calabi-Yau threefold toric hypersurfaces and our best hope for finding/constructing solutions of string theory with physics similar to our world. Unfortunately, there seems to currently be no clear roadmap\footnote{Correlations have been observed but, as far as the author knows, there are no definitive answers to questions like ``where in KS should one search to find a dS vacuum?"} of where in KS to search for CYs with `interesting physics'. Thus, this wealth of solutions actually raises a very serious computational obstacle: given that we likely can never study every CY in KS, how might one find a CY similar to the real world?

This paper focused on the most direct response: colloquially, ``just perform the calculations quicker by not stepping through an exponentially redundant set". Specifically, following Wall's theorem, this paper demonstrated a simple algorithm (see \cref{subsec:on-demand}) for either directly generating NTFE FR(S)Ts, or proving that no such FR(S)T exists. Despite not being optimized, Python implementations of this algorithm (in CYTools \cite{Demirtas2022cytools}) were observed to be orders of magnitude faster than the algorithm laid out in \cref{sec:naive}, matching expectations (see \cref{fig:redundancy}).

Following this, we demonstrated how to directly generate (the support of) the secondary subfan of fine triangulations in \cref{sec:fan}. Initial tests (not included) suggest that this subfan can be used to sample orders-of-magnitude more NTFE FRSTs of the polytope with $h^{1,1}=491$ than previously achievable, but we do not present detailed discussion here. Instead, \cref{alg:gerald} is provided since this subfan defines the `landscape' of secondary cones, and since it is extremely similar to \cref{alg:cpl_ineqs}.

There are many future directions for this study, some of which are direct:
\begin{enumerate}
    \item better utilize the structure of the hyperplanes when finding points in the relevant cones,
    \item better understand when a set of $2$-face triangulations is `extendable' into an FR(S)T, and
    \item further optimize the implementation of the main algorithm (\cref{sec:on-demand}).
\end{enumerate}
There are also larger scale directions, such as:
\begin{enumerate}
    \item reformulate string-theory questions into those relating the (intersections of) secondary cones,
    \item further develop algorithms exploiting the secondary subfan of fine triangulations,
    \item look (especially in $2$-face data) for structures/correlations relevant to physical quantities, and
    \item look for similar algorithms that cut out even more redundancies.
\end{enumerate}
As one final note, despite KS presenting daunting computational challenges, we see that there is much room to grow.

\section{Acknowledgements}

I would like to thank Liam McAllister and Naomi Gendler for their discussions that ultimately led me down this line of study, and their encouragement throughout it. Further, I would like to thank Andres Rios-Tascon for his greatly enjoyable (and informative) conversations about secondary cones and \cref{alg:cpl_ineqs}. I would also like to thank Andreas Schachner for his encouragement and willingness to always discuss such computational topics. Finally, I would like to thank Jakob Moritz for his discussions about regular triangulations as represented in height space.

This work was funded in part by NSF grant PHY–2014071.

\bibliographystyle{utphys}
\bibliography{refs}

@article{Wall:1966,
    Author = {Wall, C.  T.  C. },
    Da = {1966/12/01},
    Date-Added = {2023-07-20 13:02:37 +0700},
    Date-Modified = {2023-07-20 13:02:37 +0700},
    Doi = {10.1007/BF01389738},
    Id = {Wall1966},
    Isbn = {1432-1297},
    Journal = {Inventiones mathematicae},
    Number = {4},
    Pages = {355--374},
    Title = {Classification problems in differential topology. V},
    Ty = {JOUR},
    Url = {https://doi.org/10.1007/BF01389738},
    Volume = {1},
    Year = {1966},
    Bdsk-Url-1 = {https://doi.org/10.1007/BF01389738}
}

@book{Rockafellar:1970,
    url = {https://doi.org/10.1515/9781400873173},
    title = {Convex Analysis},
    author = {Ralph Tyrell Rockafellar},
    publisher = {Princeton University Press},
    address = {Princeton},
    doi = {doi:10.1515/9781400873173},
    isbn = {9781400873173},
    year = {1970},
    lastchecked = {2023-09-10}
}

@article{Reznick:1986,
    title = {Lattice point simplices},
    journal = {Discrete Mathematics},
    volume = {60},
    pages = {219-242},
    year = {1986},
    issn = {0012-365X},
    doi = {https://doi.org/10.1016/0012-365X(86)90015-4},
    url = {https://www.sciencedirect.com/science/article/pii/0012365X86900154},
    author = {Bruce Reznick},
    abstract = {We consider simplices in Rmwith lattice point vertices, no other boundary lattice points and n interior lattice points, with an emphasis on the barycentric coordinates of the interior points. We completely classify such triangles under unimodular equivalence and enumerate. For example, in a lattice point triangle with exactly one interior point, that point must be the centroid. We discuss the literature for fundamental tetrahedra and prove that there are seven possible barycentric coordinates for a one-point tetrahedron. Following suggestions of P. Erdös, we prove that, for fixed m and n, there are only finitely many possible sets of barycentric coordinates for the interior points. We also discuss a generalization of Beatty's problem in combinatorial number theory which has arisen several times in recent years.}
}

@article{Berglund:1995,
    title = {Mirror symmetry and the moduli space for generic hypersurfaces in toric varieties},
    journal = {Nuclear Physics B},
    volume = {456},
    number = {1},
    pages = {153-204},
    year = {1995},
    issn = {0550-3213},
    doi = {https://doi.org/10.1016/0550-3213(95)00434-2},
    url = {https://www.sciencedirect.com/science/article/pii/0550321395004342},
    author = {Per Berglund and Sheldon Katz and Albrecht Klemm},
    abstract = {The moduli dependence of (2,2) superstring compactifications based on Calabi-Yau hypersurfaces in weighted projective space has so far only been investigated for Fermat-type polynomial constraints. These correspond to Landau-Ginzburg orbifolds with c = 9 whose potential is a sum of A-type singularities. Here we consider the generalization to arbitrary quasi-homogeneous singularities at c = 9. We use minor symmetry to derive the dependence of the models on the complexified Kähler moduli and check the expansions of some topological correlation functions against explicit genus zero and genus one instanton calculations. As an important application we give examples of how non-algebraic (“twisted”) deformations can be mapped to algebraic ones, hence allowing us to study the full moduli space. We also study how moduli spaces can be nested in each other, thus enabling a (singular) transition from one theory to another. Following the recent work of Greene, Morrison and Strominger we show that this corresponds to black hole condensation in type II string theories compactified on Calabi-Yau manifolds.}
}

@article{Kreuzer:2000xy,
    author = "Kreuzer, Maximilian and Skarke, Harald",
    title = "{Complete classification of reflexive polyhedra in four-dimensions}",
    eprint = "hep-th/0002240",
    archivePrefix = "arXiv",
    reportNumber = "HUB-EP-00-13, TUW-00-07",
    doi = "10.4310/ATMP.2000.v4.n6.a2",
    journal = "Adv. Theor. Math. Phys.",
    volume = "4",
    pages = "1209--1230",
    year = "2000"
}

@InProceedings{Rambau:TOPCOM:2002,
  Title                    = {TOPCOM: Triangulations of Point Configurations and Oriented Matroids},
  Author                   = {Rambau, J{\"o}rg},
  Booktitle                = {Proceedings of the International Congress of Mathematical Software},
  Year                     = {2002},
  Keywords                 = {ownrefereed},
  URL                      = {http://www.zib.de/PaperWeb/abstracts/ZR-02-17}
}

@book{Loera:2010,
    title = "Triangulations",
    subtitle = "Structures for Algorithms and Applications",
    author = "Loera, Jes{\'{u}}s A. and Rambau, J{\"{o}}rg and Santos, Francisco",
    year = "2010",
    publisher = "Springer Berlin, Heidelberg",
    doi = {10.1007/978-3-642-12971-1},
    isbn = {978-3-642-12971-1}
}

@misc{Braun:2017,
      title={The Hodge Numbers of Divisors of Calabi-Yau Threefold Hypersurfaces}, 
      author={Andreas P. Braun and Cody Long and Liam McAllister and Michael Stillman and Benjamin Sung},
      year={2017},
      eprint={1712.04946},
      archivePrefix={arXiv},
      primaryClass={hep-th}
}

@article{Demirtas:2020dbm,
    author = "Demirtas, Mehmet and McAllister, Liam and Rios-Tascon, Andres",
    title = "{Bounding the Kreuzer-Skarke Landscape}",
    eprint = "2008.01730",
    archivePrefix = "arXiv",
    primaryClass = "hep-th",
    doi = "10.1002/prop.202000086",
    journal = "Fortsch. Phys.",
    volume = "68",
    pages = "2000086",
    year = "2020"
}

@article{Demirtas:2021gsq,
    author = "Demirtas, Mehmet and Gendler, Naomi and Long, Cody and McAllister, Liam and Moritz, Jakob",
    title = "{PQ axiverse}",
    eprint = "2112.04503",
    archivePrefix = "arXiv",
    primaryClass = "hep-th",
    doi = "10.1007/JHEP06(2023)092",
    journal = "JHEP",
    volume = "06",
    pages = "092",
    year = "2023"
}

@misc{Demirtas2022cytools,
      title={CYTools: A Software Package for Analyzing Calabi-Yau Manifolds}, 
      author={Mehmet Demirtas and Andres Rios-Tascon and Liam McAllister},
      year={2022},
      eprint={2211.03823},
      archivePrefix={arXiv},
      primaryClass={hep-th}
}

@article{Gendler:2022ztv,
    author = "Gendler, Naomi and Heidenreich, Ben and McAllister, Liam and Moritz, Jakob and Rudelius, Tom",
    title = "{Moduli Space Reconstruction and Weak Gravity}",
    eprint = "2212.10573",
    archivePrefix = "arXiv",
    primaryClass = "hep-th",
    reportNumber = "ACFI-T22-10",
    month = "12",
    year = "2022"
}

@article{Gendler:2023hwg,
    author = "Gendler, Naomi and Janssen, Oliver and Kleban, Matthew and La Madrid, Joan and Mehta, Viraf M.",
    title = "{Axion minima in string theory}",
    eprint = "2309.01831",
    archivePrefix = "arXiv",
    primaryClass = "hep-th",
    month = "9",
    year = "2023"
}

@article{Gendler:WIP,
    author = "Gendler, Naomi and MacFadden, Nate and McAllister, Liam and Moritz, Jakob and Nally, Richard and Schachner, Andreas and Stillman, Mike.",
    title = "{Counting Calabi-Yau Threefolds}",
    note = "WIP"
}

@article{MacFadden:WIP,
    author = "MacFadden, Nate and Stepniczka, Michael",
    title = "{Bounding Calabi-Yau Threefolds at $h^{1,1}=491$}",
    note = "WIP"
}

\newpage
\appendix
\section{Verification of \cref{alg:gerald}}
\label{app:gerald_verification}

In this section, we show that the output of \cref{alg:gerald}, denoted $H$, defines the support of the secondary subfan of fine triangulations. As argued in \cref{sec:fan}, attention can be limited to $2$D cases - we will do so throughout this section\footnote{So the $G$ here is akin to $G_i$ in the main text.}. That is, we will show $\mathrm{coni}(G)^* = \mathrm{coni}(H)^*$, where the conical hulls are taken over the rows of each matrix.

\subsection{$\mathrm{coni}(G)^*\subseteq\mathrm{coni}(H)^*$}
Here, we show that the support of the secondary subfan of fine triangulations is contained in the cone defined by $H$.

\begin{proposition}
    $\mathrm{coni}(G)^*\subseteq\mathrm{coni}(H)^*$.
\end{proposition}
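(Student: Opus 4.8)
The plan is to show that each row $n$ of $H$ yields an inequality $n\cdot h\ge 0$ that is \emph{valid} on $\mathrm{coni}(G)^*$; since $\mathrm{coni}(H)^*=\{h:\,n\cdot h\ge 0\text{ for every row }n\text{ of }H\}$, establishing validity is exactly the claimed inclusion. By the discussion of \cref{sec:fan}, $\mathrm{coni}(G)^*$ is the support of the secondary subfan of fine triangulations, i.e.\ the closed set of height vectors generating a fine regular subdivision of the $2$-face point configuration $\mathcal{A}$. Because $\mathrm{coni}(H)^*$ is closed (it is cut out by non-strict inequalities) and the heights generating genuine fine regular triangulations are dense in $\mathrm{coni}(G)^*$ (the rest lie on the measure-zero walls between secondary cones), it suffices to verify $n\cdot h\ge 0$ for every $h$ that generates a fine triangulation and then pass to the closure.

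First I would translate fineness into a statement about the lifted configuration. If $h$ generates a fine triangulation, every lattice point $p_l\in\mathcal{A}$ is used, so the lift $\tilde p_l=(p_l,h_l)$ is a vertex of the lower hull; in particular $h_l=\hat h(p_l)$, where $\hat h$ denotes the convex, piecewise-linear lower-hull function. By construction $\hat h$ satisfies $\hat h\!\left(\sum_l \lambda_l p_l\right)\le \sum_l \lambda_l h_l$ for every convex combination $\sum_l\lambda_l p_l$ with $\lambda_l\ge0$ and $\sum_l\lambda_l=1$. This single convexity inequality, applied at the distinguished point $p_r$ appearing in a given row, is precisely what produces that row's constraint.

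The two families of rows then reduce to identifying the barycentric coordinates of $p_r$. For a linear-constraint row built from $\{p_i,p_j\}$ with unique interior lattice point $p_r$, uniqueness forces $p_r=\tfrac12(p_i+p_j)$, since a lattice segment with exactly one interior lattice point has that point at its midpoint (the edge vector $p_j-p_i$ has coordinate-gcd $2$); hence $h_r=\hat h(p_r)\le\tfrac12(h_i+h_j)$, i.e.\ $h_i+h_j-2h_r\ge0$, which is exactly $n\cdot h$. For a planar-constraint row built from $\{p_i,p_j,p_k\}$, the hypotheses imposed by \cref{alg:gerald} --- a unique interior lattice point $p_r$ and no non-vertex lattice points on the boundary --- force $p_r$ to be the centroid $\tfrac13(p_i+p_j+p_k)$, whence $h_r=\hat h(p_r)\le\tfrac13(h_i+h_j+h_k)$, i.e.\ $h_i+h_j+h_k-3h_r\ge0=n\cdot h$. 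Collecting these over all rows gives $h\in\mathrm{coni}(H)^*$, and taking closures completes the inclusion.

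I expect the centroid identity for the planar rows to be the main obstacle, as it is the one step where the specific coefficient $-3$ (rather than general barycentric weights) must be justified. I would prove it by joining $p_r$ to the three vertices and observing that each of the three sub-triangles $\mathrm{conv}(p_r,p_j,p_k)$, $\mathrm{conv}(p_i,p_r,p_k)$, $\mathrm{conv}(p_i,p_j,p_r)$ is an \emph{empty} lattice triangle: any lattice point in its interior or on an edge would be either a second interior point of $\mathrm{conv}(p_i,p_j,p_k)$ or a non-vertex boundary point, both excluded by hypothesis. Each sub-triangle therefore has area $\tfrac12$ (e.g.\ by Pick's theorem), and since the area of the sub-triangle opposite $p_i$ equals $\alpha\,\mathrm{Area}(\mathrm{conv}(p_i,p_j,p_k))$ with $\alpha$ the barycentric coordinate of $p_r$ at $p_i$, all three coordinates must equal $\tfrac13$. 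A minor secondary point is the density/closure argument, which guarantees that the non-strict inequalities persist on the walls of the subfan where the regular subdivision need not be a full triangulation.
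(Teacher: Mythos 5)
Your proposal is correct and takes essentially the same route as the paper: both rest on the observation that membership in $\mathrm{coni}(G)^*$ means every lifted point lies on the lower hull, and that each row of $H$ asserts precisely that $\tilde{p}_r$ lies weakly below the lifted segment/triangle at the midpoint (for linear rows) or centroid (for planar rows) --- the paper simply phrases this as a contradiction (a violated row forces $\tilde{p}_r$ strictly above the hull) rather than as direct validity of each inequality. The only substantive differences are cosmetic: you supply a self-contained Pick's-theorem proof of the centroid identity where the paper cites \cite{Reznick:1986}, and your density/closure step is a careful but unnecessary detour, since the characterization ``all lifted points on lower faces'' already holds on the whole closed cone.
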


The proof of this proposition effectively is just a recognition of what the hyperplanes in $H$ mean. Recall that there are two types of hyperplanes in $H$ (see \cref{alg:gerald,fig:gerald_constraints}):
\begin{enumerate}
    \item `linear constraint' $\to$ for every line segment $\{p_i,p_j\}$ containing exactly $1$ interior point, $p_r$, ensure that $\tilde{p}_r^3\leq\tilde{q}^3$ for the point $\tilde{q}\in\mathrm{conv}(\{\tilde{p}_i,\tilde{p}_j\})$ with $\tilde{q}^1=\tilde{p}_r^1$ and $\tilde{q}^2=\tilde{p}_r^2$, and
    \item `plane constraint' $\to$ for every triangle $\{p_i,p_j,p_k\}$ containing exactly $1$ point, $p_r$, in its strict interior (and no other lattice points on the boundary), ensure that $\tilde{p}_r^3\leq\tilde{q}^3$ for the point $\tilde{q}\in\mathrm{conv}(\{\tilde{p}_i,\tilde{p}_j,\tilde{p}_k\})$ with $\tilde{q}^1=\tilde{p}_r^1$ and $\tilde{q}^2=\tilde{p}_r^2$.
\end{enumerate}
The latter constraint warrants a brief note: in \cref{alg:gerald}, we implicitly use that $p_r = (p_i + p_j + p_k)/3$. That is, that $p_r$ is the centroid of the triangle $\{p_i,p_j,p_k\}$. This is always true for $2$D lattice triangles whose only non-vertex lattice point is in the strict interior\cite{Reznick:1986}.

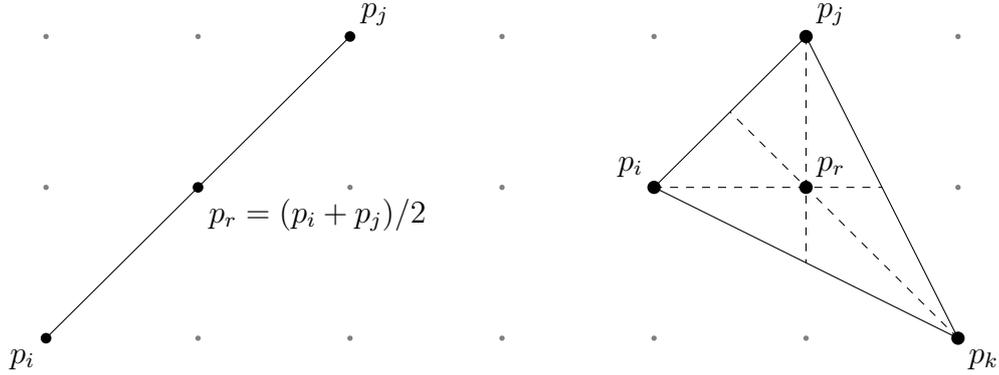
\begin{figure}[h]
    \centering
    \begin{tikzpicture}
        \foreach \x in {0,2,4,6,8,10,12}
            \foreach \y in {0,2,4}
                \fill[gray] (\x,\y) circle (1pt);
    
        \coordinate (A) at (0,0);
        \coordinate (B) at (4,4);
        \coordinate (M) at (2,2);
    
        \draw (A) -- (B);
        \foreach \p in {A, M, B}
            \fill (\p) circle(2pt);
    
        \node[below left] at (A) {$p_i$};
        \node[above right] at (B) {$p_j$};
        \node[below right] at (M) {$p_r = (p_i + p_j)/2$};
    
        \coordinate (Pi) at (8,2);
        \coordinate (Pj) at (10,4);
        \coordinate (Pk) at (12,0);
        \coordinate (Pr) at (10,2);
    
        \coordinate (Midij) at ($(Pi)!0.5!(Pj)$);
        \coordinate (Midik) at ($(Pi)!0.5!(Pk)$);
        \coordinate (Midjk) at ($(Pj)!0.5!(Pk)$);
    
        \draw (Pi) -- (Pj) -- (Pk) -- cycle;
        \draw[dashed] (Pi) -- (Midjk);
        \draw[dashed] (Pj) -- (Midik);
        \draw[dashed] (Pk) -- (Midij);
    
        \foreach \p in {Pi, Pj, Pk, Pr}
            \fill (\p) circle(2.5pt);
    
        \node[above left] at (Pi) {$p_i$};
        \node[above right] at (Pj) {$p_j$};
        \node[below right] at (Pk) {$p_k$};
        \node[above right] at (Pr) {$p_r$};
        
    \end{tikzpicture}
    \caption{Illustrations of the linear and planar constraints in \cref{alg:gerald}.}
    \label{fig:gerald_constraints}
\end{figure}

\begin{proof}
    Let $h\in\mathrm{coni}(G)^*$. That is, $h$ is a height vector for $\mathcal{A}$ such that all points $\tilde{p}\in\tilde{\mathcal{A}}$ appear on some lower face of $\mathrm{conv}(\tilde{\mathcal{A}})$.
    
    Assume that $h\notin\mathrm{coni}(H)^*$. Thus, a line constraint and/or a plane constraint is violated. That means, there is a lifted lattice point, $\tilde{p}_r$, and another point $\tilde{q}$ such that $\tilde{q}^1=\tilde{p}_r^1$ and $\tilde{q}^2=\tilde{p}_r^2$ but $\tilde{q}^3<\tilde{p}_r^3$. Thus, $\tilde{p}_r$ cannot be on a lower face, a contradiction.
\end{proof}

\subsection{$\mathrm{coni}(G)\subseteq\mathrm{coni}(H)$}

Here, we show that the cone defined by $H$ is contained in the support of the secondary subfan of fine triangulations. We do so through use of the dual, instead showing that $\mathrm{coni}(G)\subseteq\mathrm{coni}(H)$. That is, we want to show that every hyperplane inequality of $\mathrm{coni}(G)^*$ can be represented as the conical hull of hyperplane inequalities in $\mathrm{coni}(H)^*$. First, we must understand what the hyperplane inequalities of $\mathrm{coni}(G)^*$ look like.

\begin{lemma}
\label{lem:secondary_hyperplanes_are_ok}
    $\mathrm{coni}(G)\subseteq\sum_{i=1}^N \mathrm{coni}(H_{\mathcal{T}_i})$, where $H_{\mathcal{T}_i}$ are the hyperplane inequalities associated to the secondary cone of the $i$th FRST of the associated polytope.
\end{lemma}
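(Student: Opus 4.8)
The plan is to argue entirely by conic duality, so that the statement collapses to the biduality theorem for polyhedral cones. Recall from \cref{sec:fan} that $\mathrm{coni}(G)^*$ is the support of the subfan, i.e. the union $S=\bigcup_{i=1}^N \mathrm{coni}(H_{\mathcal{T}_i})^*$ of the secondary cones of the fine (regular) triangulations $\mathcal{T}_i$, which is taken there to be a single polyhedral cone whose inwards-facing defining normals are the rows of $G$. Equivalently, $S=\{x : g\cdot x\geq 0 \text{ for every row } g \text{ of } G\}$, and $\mathrm{coni}(G)$ is the conical hull of those same rows.

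First I would fix an arbitrary generator $g$ of $\mathrm{coni}(G)$. Since $g$ is one of the defining normals of $S$, the inequality $g\cdot x\geq 0$ holds for every $x\in S$. Next, each individual secondary cone satisfies $\mathrm{coni}(H_{\mathcal{T}_i})^*\subseteq S$, being one of the cones whose union is $S$; hence $g\cdot x\geq 0$ on all of $\mathrm{coni}(H_{\mathcal{T}_i})^*$. By the very definition of the dual cone, a functional that is nonnegative on $\mathrm{coni}(H_{\mathcal{T}_i})^*$ belongs to its dual, so $g\in\left(\mathrm{coni}(H_{\mathcal{T}_i})^*\right)^*=\mathrm{coni}(H_{\mathcal{T}_i})$, the final equality being biduality for closed convex (here polyhedral) cones. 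As this holds for every $i$, I actually obtain the stronger $g\in\bigcap_{i=1}^N\mathrm{coni}(H_{\mathcal{T}_i})\subseteq\sum_{i=1}^N\mathrm{coni}(H_{\mathcal{T}_i})$. Because every generator of $\mathrm{coni}(G)$ lands in the cone $\sum_{i=1}^N\mathrm{coni}(H_{\mathcal{T}_i})$, the whole of $\mathrm{coni}(G)$ does too, which is the claim.

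The step I would be most careful about is the standing assumption that $S$ really is a convex polyhedral cone, so that ``its defining hyperplanes $G$'' is well-posed and biduality applies; this convexity is asserted in \cref{sec:fan} and is essentially what the appendix as a whole establishes, so I would either invoke it as given or observe that the argument in fact only uses the two ingredients $\mathrm{coni}(H_{\mathcal{T}_i})^*\subseteq S$ and validity of the $G$-inequalities on $S$. Finally, I would emphasize that this lemma is only the reduction step: it shows that the facet normals of the support split across the individual FRST secondary cones, whereas the genuinely combinatorial work---proving that each single $\mathrm{coni}(H_{\mathcal{T}_i})$ is spanned (conically) by the local linear and planar constraints emitted by \cref{alg:gerald}, i.e. $\sum_{i=1}^N\mathrm{coni}(H_{\mathcal{T}_i})\subseteq\mathrm{coni}(H)$---is what remains to finish $\mathrm{coni}(G)\subseteq\mathrm{coni}(H)$.
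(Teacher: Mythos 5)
Your proof is correct and takes essentially the same route as the paper's: both arguments dualize the description of the support $S=\bigcup_{i}\mathrm{coni}(H_{\mathcal{T}_i})^*$ to conclude $\mathrm{coni}(G)\subseteq\bigcap_{i}\mathrm{coni}(H_{\mathcal{T}_i})$, and then relax the intersection to the sum. The only difference is presentational: the paper invokes the sum--intersection duality for polyhedral cones (Rockafellar, Corollary 16.4.2) at the level of whole cones, whereas you unroll the same biduality facts generator-by-generator.
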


\begin{proof}
    Recall, $G$ describes the union or, (in this case) equivalently, the sum of all secondary cones:
    \begin{equation}
        \mathrm{coni}(G)^* = \bigcup_{i=1}^N \mathrm{coni}(H_{\mathcal{T}_i})^* = \sum_{i=1}^N \mathrm{coni}(H_{\mathcal{T}_i})^*.
    \end{equation}
    It is not hard to see\cite[Corollary 16.4.2]{Rockafellar:1970} that this is, equivalently
    \begin{equation}
        \mathrm{coni}(G) = \left(\sum_{i=1}^N \mathrm{coni}(H_{\mathcal{T}_i})^*\right)^* = \bigcap_{i=1}^N \mathrm{coni}(H_{\mathcal{T}_i}).
    \end{equation}
    Thus
    \begin{equation}
        \mathrm{coni}(G) = \bigcap_{i=1}^N \mathrm{coni}(H_{\mathcal{T}_i}) \subseteq \sum_{i=1}^N \mathrm{coni}(H_{\mathcal{T}_i}).
    \end{equation}
\end{proof}

We use \cref{lem:secondary_hyperplanes_are_ok} to reformulate our studies: instead of directly studying hyperplanes in $\mathrm{coni}(G)^*$, we can instead study those in $\mathrm{coni}(H_{\mathcal{T}_i})^*$. We do so because there is a nice characterization (\cref{alg:cpl_ineqs}) of the inequalities in $\mathrm{coni}(H_{\mathcal{T}_i})^*$. Explicitly, $\mathrm{coni}(H_{\mathcal{T}_i})$ is generated by rays, each ray associated to a pair, $T_i$ and $T_j$, of unimodular simplices sharing an edge (see \cref{fig:cpl_cases}). There are three brief notes warranting mention regarding the inequalities associated to such $T_i$ and $T_j$:
\begin{enumerate}
    \item for a given triangulation $\mathcal{T}$, there is a single inequality in $\mathrm{coni}(H_{\mathcal{T}})^*$ associated to $T_i$ and $T_j$. This is because the rank of the matrix in \cref{alg:cpl_ineqs} is $3$,
    \item the inequalities in \cref{alg:cpl_ineqs} are basis independent so, when calculating them, we can always pick a lattice basis such that $T_i\cap T_j$ is horizontal and such that $T_i$ also has a vertical edge (exactly as in \cref{fig:cpl_cases}), and
    \item for $T_j$ to be unimodular $p_{n2}^2 = p_{s1}^2-1 = p_{s2}^2-1$. That is, $T_j$ has a `height' of $1$.
\end{enumerate}
From these notes, a simple lemma follows (see \cref{fig:pts_are_horizontal}):

\begin{figure}[t]
    \centering
    \begin{tikzpicture}
        \foreach \x in {0,2,4,6,8,10,12}
            \foreach \y in {0,2,4}
                \fill[gray] (\x,\y) circle (1pt);
    
        \draw (2,4) -- (4,2) -- (2,2) -- cycle;
        \draw (0,0) -- (4,2) -- (2,2) -- cycle;
        \draw[dashed,gray] (0,0) -- (2,4);

        \node at (2.6,2.6) {$T_i$};
        \node at (2.1,1.5) {$T_j$};

        \fill (2,4) circle(2.5pt);
        \fill (2,2) circle(2.5pt);
        \fill (4,2) circle(2.5pt);
        \fill (0,0) circle(2.5pt);
    
        \node[left] at (2,4) {$p_{n1}$};
        \node[above left] at (2,2) {$p_{s1}$};
        \node[above right] at (4,2) {$p_{s2}$};
        \node[left] at (0,0) {$p_{n2}$};
        
        \draw (6,4) -- (8,2) -- (6,2) -- cycle;
        \draw (6,0) -- (8,2) -- (6,2) -- cycle;
        \draw[dashed,gray,line width=3pt] (6,0) -- (6,4);

        \node at (6.6,2.6) {$T_i$};
        \node at (6.6,1.4) {$T_j$};

        \fill (6,4) circle(2.5pt);
        \fill (6,2) circle(2.5pt);
        \fill (8,2) circle(2.5pt);
        \fill (6,0) circle(2.5pt);
    
        \node[left] at (6,4) {$p_{n1}$};
        \node[left] at (6,2) {$p_{s1}$};
        \node[above right] at (8,2) {$p_{s2}$};
        \node[left] at (6,0) {$p_{n2}$};

        \draw (10,4) -- (12,2) -- (10,2) -- cycle;
        \draw (12,0) -- (12,2) -- (10,2) -- cycle;
        \draw[dashed,gray] (10,4) -- (12,0);

        \node at (11,2.5) {$T_i$};
        \node at (10.9,1.5) {$T_j$};

        \fill (10,4) circle(2.5pt);
        \fill (10,2) circle(2.5pt);
        \fill (12,2) circle(2.5pt);
        \fill (12,0) circle(2.5pt);
    
        \node[left] at (10,4) {$p_{n1}$};
        \node[left] at (10,2) {$p_{s1}$};
        \node[above right] at (12,2) {$p_{s2}$};
        \node[left] at (12,0) {$p_{n2}$};

        \node at (-1,2) {{\huge $\dots$}};
        \node at (13,2) {{\huge $\dots$}};
        
    \end{tikzpicture}
    \caption{The geometries of the simplices $T_i$ and $T_j$ defining the secondary cone inequalities. These shapes are classified by where the dotted line, $\{p_{n1},p_{n2}\}$ lays. If it lays outside the convex hull (left), then the case is called `two triangles'; if it lays on an edge of the convex hull (center), then the case is called `one triangle'; otherwise (right), the case is called `parallelogram'.}
    \label{fig:cpl_cases}
\end{figure}
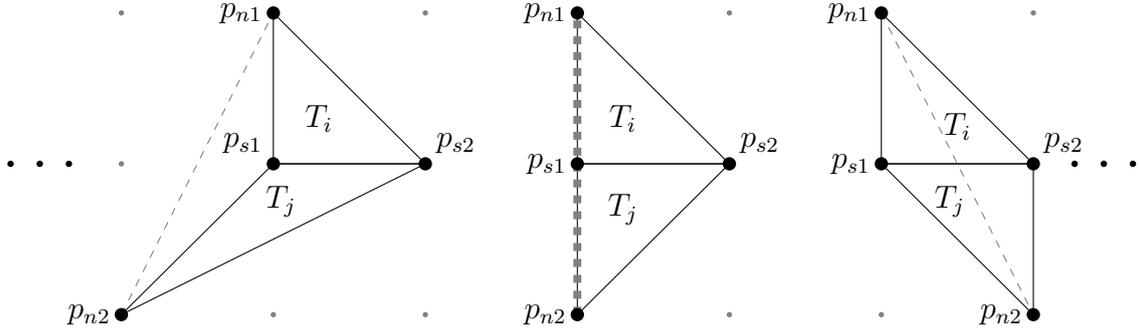

\begin{lemma}
\label{lem:pts_are_horizontal}
    Any non-vertex lattice point $p\in\mathrm{conv}(T_i\cup T_j)$ has $y$-coordinate equal to $p_{s1}^2$ (equivalently, $p_{s2}^2$).
\end{lemma}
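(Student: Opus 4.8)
The plan is to work in the normalized lattice basis supplied by the three notes immediately preceding the statement, and then reduce the whole claim to a one-line ``slab'' argument about integer $y$-coordinates. First I would fix coordinates. By the second note I may assume the shared edge $T_i\cap T_j=\{p_{s1},p_{s2}\}$ is horizontal, so $p_{s1}^2=p_{s2}^2$; translating vertically (the assertion is translation-invariant, so this is free) I set $p_{s1}^2=p_{s2}^2=0$. Since $\{p_{s1},p_{s2}\}$ is an edge of a unimodular triangle it contains no interior lattice point, so $p_{s2}-p_{s1}$ is primitive; being horizontal, it has Euclidean length $1$.

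Next I would pin down the heights of the two apexes. Because $T_i=\{p_{n1},p_{s1},p_{s2}\}$ is unimodular with a horizontal base of length $1$, its area $\tfrac12$ forces the apex $p_{n1}$ to lie at vertical distance exactly $1$ from the base line; taking $T_i$ to be the upper triangle gives $p_{n1}^2=1$. By the third note, $p_{n2}^2=p_{s1}^2-1=-1$. Hence every vertex of $\mathrm{conv}(T_i\cup T_j)$ has $y$-coordinate in $\{-1,0,1\}$, and the entire hull lies in the closed slab $-1\le y\le 1$.

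The key step is then immediate. Any lattice point $p$ in the hull has integer $y$-coordinate, so $p^2\in\{-1,0,1\}$, and I claim the two extreme slices each meet the hull in a single point. Writing $p$ as a convex combination $\sum_v\lambda_v v$ of the vertices, its height satisfies $p^2=\sum_v\lambda_v v^2\le \lambda_{p_{n1}}\le 1$, with equality only when $\lambda_{p_{n1}}=1$, i.e. only at $p_{n1}$ itself; symmetrically, the slice $y=-1$ contains only $p_{n2}$. As $p_{n1}$ and $p_{n2}$ are vertices, every \emph{non}-vertex lattice point must avoid $y=\pm1$ and therefore sits at $y=0=p_{s1}^2$, which is the claim.

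I do not expect a serious obstacle: once the basis is normalized, the lemma is essentially a bookkeeping consequence of unimodularity. The only point demanding a little care is the claim that the extreme slices $y=\pm1$ contain no lattice point other than the apexes, but this follows at once from the unique-maximizer property of convex combinations and holds uniformly across the two-triangle, one-triangle, and parallelogram geometries of \cref{fig:cpl_cases}, so no case analysis is needed. It is nonetheless worth sanity-checking the conclusion against those three configurations (for instance, in the one-triangle case $p_{s1}$ becomes a non-vertex lattice point of the hull, correctly sitting at $y=p_{s1}^2$), which confirms the statement is sharp.
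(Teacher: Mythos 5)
Your proof is correct and follows essentially the same route as the paper's: normalize the basis so the shared edge is horizontal, use unimodularity to confine $\mathrm{conv}(T_i\cup T_j)$ to the slab $y_0-1\le y\le y_0+1$, and observe that the extreme slices contain only the apexes $p_{n1}$ and $p_{n2}$, forcing non-vertex lattice points onto $y=y_0$. You merely make explicit two steps the paper leaves implicit (the area argument fixing the apex heights and the convex-combination argument for uniqueness at the extreme slices), which is a fair expansion rather than a different approach.
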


\begin{proof}
    Take the aforementioned choice of basis ($\{p_{s1},p_{s2}\}$ is horizontal, $\{p_{s1},p_{n1}\}$ is vertical). Let $y_0 = p_{s1}^2 = p_{s2}^2$.

    Then, as previously noted, all points $(x,y)\in\mathrm{conv}(T_i\cup T_j)$ obey $y_0-1\leq y\leq y_0+1$. Furthermore, the only point with $y$-coordinate equal to $y_0-1$ is $p_{n2}$; the only point with $y$-coordinate equal to $y_0+1$ is $p_{n1}$.
\end{proof}

\begin{figure}[t]
    \centering
    \begin{tikzpicture}
        \foreach \x in {0,...,12}
            \foreach \y in {0,2,4}
                \fill[gray] (\x,\y) circle (1pt);
    
        \draw (5,4) -- (6,2) -- (5,2) -- cycle;
        \draw (0,0) -- (6,2) -- (5,2) -- cycle;
        \draw[dashed,gray] (0,0) -- (5,4);

        \fill (5,4) circle(2.5pt);
        \fill (5,2) circle(2.5pt);
        \fill (6,2) circle(2.5pt);
        \fill (0,0) circle(2.5pt);

        \node[draw=red, cross=2.5pt] at (3,2) {};
        \node[draw=red, cross=2.5pt] at (4,2) {};
    
        \node[left] at (5,4) {$p_{n1}$};
        \node[above left] at (5,2) {$p_{s1}$};
        \node[above right] at (6,2) {$p_{s2}$};
        \node[left] at (0,0) {$p_{n2}$};

        \node[red, below] at (3,2) {$q_1$};
        \node[red, below left] at (4,2) {$q_2$};

        \draw (11,4) -- (12,2) -- (11,2) -- cycle;
        \draw (7,0) -- (12,2) -- (11,2) -- cycle;
        \draw[dashed,gray] (7,0) -- (11,4);

        \fill (11,4) circle(2.5pt);
        \fill (11,2) circle(2.5pt);
        \fill (12,2) circle(2.5pt);
        \fill (7,0) circle(2.5pt);

        \node[draw=red, cross=2.5pt] at (9,2) {};
        \node[draw=red, cross=2.5pt] at (10,2) {};
    
        \node[left] at (11,4) {$p_{n1}$};
        \node[above left] at (11,2) {$p_{s1}$};
        \node[above right] at (12,2) {$p_{s2}$};
        \node[left] at (7,0) {$p_{n2}$};

        \node[red, below] at (9,2) {$q_1$};
        \node[red, below left] at (10,2) {$q_2$};
        
    \end{tikzpicture}
    \caption{For two unimodular triangle $T_i=\{p_{n1},p_{s1},p_{s2}\}$ and $T_j=\{p_{n2},p_{s1},p_{s2}\}$ in a lattice basis such that $p_{s1}^2=p_{s2}^2$ and $p_{s1}^1=p_{n1}^1$, any non-vertex lattice point, $p\in\mathrm{conv}(T_i\cup T_j)$ must have $p^2 = p_{s1}^2$. I.e., it must lay on the (extended) line $T_i\cap T_j$. There are then two cases: there can be a lattice point on the interior of the line $\{p_{n1},p_{n2}\}$ (right) or not (left).}
    \label{fig:pts_are_horizontal}
\end{figure}
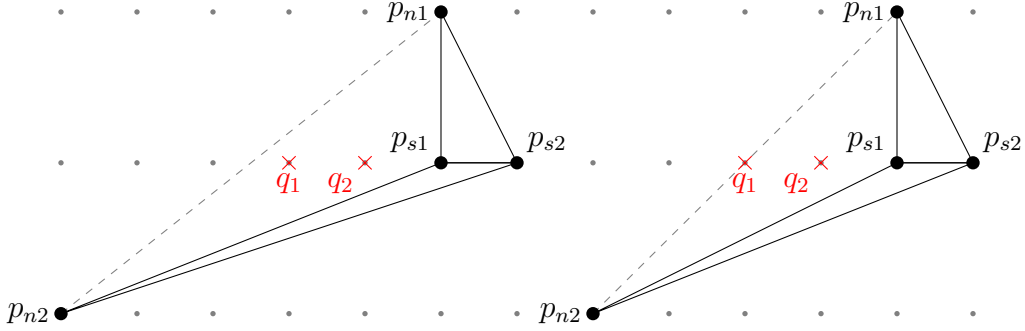

It will be beneficial to classify such inequalities (i.e., pairs of simplices $T_i$ and $T_j$) by their shape and study them case-by-case. Recognize that such a pair of simplices is exactly one of the following three classes:
\begin{enumerate}
    \item (right of \cref{fig:cpl_cases}; `parallelogram'): $\mathrm{conv}(T_i\cup T_j) = T_i\cup T_j$ and both diagonals are strictly contained in the region $T_i\cup T_j$,
    \item (center of \cref{fig:cpl_cases}; `one-triangle'): $\mathrm{conv}(T_i\cup T_j) = T_i\cup T_j$ and one `diagonal' is also an edge of the region $T_i\cup T_j$, or
    \item (left of \cref{fig:cpl_cases}; a `two-triangles'): $\mathrm{conv}(T_i\cup T_j)\supset T_i\cup T_j$.
\end{enumerate}
As we will immediately see, inequalities associated with the parallelogram case do not arise in $\mathrm{coni}(G)^*$, while the other two do. It is not hard to see, however, that the other two classes of inequalities are in $\mathrm{coni}(H)$. Since these classes of inequalities generate $\mathrm{coni}(G)$, this then will show $\mathrm{coni}(G)\subseteq\mathrm{coni}(H)$.

\subsubsection{Parallelogram case}

Begin with the parallelogram case (right side of \cref{fig:cpl_cases}). There are no concerns about fine-ness of points - each point is a vertex. The associated inequality just controls which diagonal occurs (the solid line vs. the dashed line). Thus, since we are only concerned with fineness, we expect that such inequalities do not occur in $\mathrm{coni}(G)$.

To show this, we split analysis into whether the diagonal is flippable (i.e., into another regular triangulation) or not. First, consider the case that it is flippable, then the secondary cones associated to these two regular triangulations share a facet (a `wall'). The inequality associated with the parallelogram is normal to this wall and the two cones on either side have opposite normal vectors (but not both). Thus, these normals are not contained in
\begin{equation}
    \mathrm{coni}(G) = \bigcap_{i=1}^N \mathrm{coni}(H_{\mathcal{T}_i}).
\end{equation}

Second, consider the case that the diagonal is not flippable. Let $n_{T_i,T_j}$ be the inequality associated to the pair of triangles $T_i$ and $T_j$. The diagonal not being flippable means that there is no such $h$ such that $(H_\mathcal{T}\setminus n_{T_i,T_j})\,h\geq 0$ and $n_{T_i,T_j}\cdot h<0$. That is, the set of all other inequalities, $H_\mathcal{T}\setminus n_{T_i,T_j}$, implies that $n_{T_i,T_j}\cdot h\geq0$. Thus, explicit consideration of this inequality is not necessary - it will be implied by the other cases (i.e., one and two triangle(s)).

\subsubsection{One-triangle case}
Now, the one-triangle case. We will show that the inequality, $n$, associated to the center of \cref{fig:cpl_cases} is just the linear constraint from \cref{alg:gerald} and \cref{fig:gerald_constraints}.

This follows simply from the fact that the null space is $1$D and that
\begin{equation}
    n^r = 
    \begin{cases}
        1 & r\in\{n1,n2\}\\
        -2 & r=s1\\
        0 & \mathrm{otherwise}
    \end{cases}
\end{equation}
spans this null space. There is a simple explanation as to why this $n$ is just the same vector calculated in \cref{alg:gerald}: the line $\{\tilde{p}_{s1}, \tilde{p}_{s2}\}$ occurs in $\mathrm{conv}(\tilde{T}_i\cup\tilde{T}_j)$ if and only if $\tilde{p}_{s1}$ is on or below the line $\{\tilde{p}_{n1},\tilde{p}_{n2}\}$. Thus, this case is really the same as the linear constraint of \cref{alg:gerald}.

\subsubsection{Two-triangles case}
Now, the two-triangles case. We will show that the inequality, $n$, associated to the right of \cref{fig:cpl_cases} can be written as a conical combination of linear and planar constraints from \cref{alg:gerald}. In contrast to the parallelogram and triangle case, this requires a little work.

Note: if the region $\mathrm{conv}(T_i\cup T_j)$ only contained $4$ lattice points (those labelled in \cref{fig:cpl_cases}), then the associated hyperplane normal is just the planar constraint of \cref{alg:gerald}. Thus, we focus on the case where there are $5+$ points in this region. Note, by \cref{lem:pts_are_horizontal}, all of these points, $p$, must have $p^2 = p_{s1}^2 = p_{s2}^2$. That is, we are considering cases exactly like what is displayed in \cref{fig:pts_are_horizontal}.

Begin with the case where there is a lattice point on the interior of line $\{p_{n1},p_{n2}\}$ (right side of \cref{fig:pts_are_horizontal}). In this case, \cref{alg:gerald} calculates the hyperplanes:
\begin{equation}
    H = \begin{bmatrix}
        n_1\\ n_2\\ \vdots\\ n_{N+1}
    \end{bmatrix} =
    \begin{blockarray}{ccccccccc}
        n1 & n2 & q1 & q2 & q3 & \dots & qN & s1 & s2 \\
        \begin{block}{[ccccccccc]}
          1 & 1 & -2 & 0 & 0  & \cdots & 0 & 0 & 0\\
          0 & 0 & 1 & -2 & 1 & \cdots & 0 & 0 & 0\\
          0 & 0 & 0 & 1 & -2  & \cdots & 0 & 0 & 0\\
          0 & 0 & 0 & 0 & 1 & \cdots & 0 & 0 & 0\\
          \vdots &  &  &  &  & \ddots &  &  & \vdots\\
          0 & 0 & 0 & 0 & 0 & \cdots & 1 & 0 & 0\\
          0 & 0 & 0 & 0 & 0 & \cdots & -2 & 1 & 0\\
          0 & 0 & 0 & 0 & 0 & \cdots & 1 & -2 & 1.\\
        \end{block}
    \end{blockarray}
\end{equation}
Consider the conical combination
\begin{equation}
    m = n_1 + \sum_{k=2}^{N+1} 2(k-1) n_k = \dots = \begin{bmatrix} 1 & 1 & 0 & \cdots & 0 & -(2N+2) & 2N \end{bmatrix}.
\end{equation}
This spans the relevant null space:
\begin{equation}
    p\cdot m = (-2N,-1) + (0,1) - 2(N+1) (0,0) + 2N (1,0) = 0.
\end{equation}
Since the null space is $1$D, we have thus shown that (in this case), the $m\in\mathrm{coni}(H_{\mathcal{T}_i})$ is expressible as a conical combination of $n\in\mathrm{coni}(H)$.

Now, the case where there is not a lattice point on the interior of line $\{p_{n1},p_{n2}\}$ (left side of \cref{fig:pts_are_horizontal}). In this case, \cref{alg:gerald} calculates the hyperplanes:
\begin{equation}
    H = \begin{bmatrix}
        n_1\\ n_2\\ \vdots\\ n_{N+1}
    \end{bmatrix} =
    \begin{blockarray}{ccccccccc}
        n1 & n2 & q1 & q2 & q3 & \dots & qN & s1 & s2 \\
        \begin{block}{[ccccccccc]}
          1 & 1 & -3 & 1 & 0  & \cdots & 0 & 0 & 0\\
          0 & 0 & 1 & -2 & 1 & \cdots & 0 & 0 & 0\\
          0 & 0 & 0 & 1 & -2  & \cdots & 0 & 0 & 0\\
          0 & 0 & 0 & 0 & 1 & \cdots & 0 & 0 & 0\\
          \vdots &  &  &  &  & \ddots &  &  & \vdots\\
          0 & 0 & 0 & 0 & 0 & \cdots & 1 & 0 & 0\\
          0 & 0 & 0 & 0 & 0 & \cdots & -2 & 1 & 0\\
          0 & 0 & 0 & 0 & 0 & \cdots & 1 & -2 & 1.\\
        \end{block}
    \end{blockarray}
\end{equation}
Consider the conical combination
\begin{equation}
    m = n_1 + \sum_{k=2}^{N+1} (2k-1) n_k = \dots = \begin{bmatrix} 1 & 1 & 0 & \cdots & 0 & -(2N+3) & 2N+1 \end{bmatrix}.
\end{equation}
This spans the relevant null space (see \cref{alg:cpl_ineqs}):
\begin{equation}
    p\cdot m = (-2N-1,-1) + (0,1)  - (2N+3) (0,0) + (2N+1) (1,0) = 0.
\end{equation}
Since the null space is $1$D, we have thus shown that (in this case), the $m\in\mathrm{coni}(H_{\mathcal{T}_i})$ is expressible as a conical combination of $n\in\mathrm{coni}(H)$.

Thus, in every $2$-triangle case, the derived secondary cone hyperplane is expressible in terms of constraints in \cref{alg:gerald}.

\subsubsection{All together}
In each case of \cref{fig:cpl_cases}, we saw that the associated hyperplane normal either did not occur in the support of the secondary subfan of fine triangulations, or that it was expressible as the conical hull of hyperplane normals in \cref{alg:gerald}. This thus shows the desired inclusion, that $\mathrm{coni}(G)\subseteq\mathrm{coni}(H)$.

\end{document}